\newcommand{\bmtx}{\begin{bmatrix}}
\newcommand{\emtx}{\end{bmatrix}}
\newcommand{\bsmtx}{\left[ \begin{smallmatrix}} 
\newcommand{\esmtx}{\end{smallmatrix} \right]}
\newcommand{\field}[1]{\mathbb{#1}}
\newcommand{\R}{\field{R}}
\newcommand{\N}{\field{N}}
\newtheorem{theorem}{Theorem}
\newtheorem{lemma}{Lemma}
\title{Stability and Performance Analysis of Discrete-Time \\ ReLU Recurrent Neural Networks}
\author{Sahel Vahedi Noori, Bin Hu, Geir Dullerud, and Peter Seiler
	\thanks{S. Vahedi Noori and P. Seiler are with the Department of Electrical Engineering \& Computer Science, at the University of Michigan ({\tt\small sahelvn@umich.edu;} and
		{\tt\small pseiler@umich.edu}). B. Hu is with the Department of Electrical and Computer Engineering at the University of Illinois at Urbana-Champaign ({\tt \small binhu7@illinois.edu}). G. Dullerud is with the
  the Department of Mechanical Science and Engineering at the University of Illinois at Urbana-Champaign ({\tt \small dullerud@illinois.edu})
  }
	}
\begin{document}

\maketitle

\begin{abstract}

This paper presents sufficient conditions for the stability and $\ell_2$-gain performance of recurrent neural networks (RNNs) with ReLU activation functions. These conditions are derived by combining Lyapunov/dissipativity theory with Quadratic Constraints (QCs) satisfied by repeated ReLUs. We write a general class of QCs for repeated RELUs using known properties for the scalar ReLU. Our stability and performance condition uses these QCs along with a ``lifted" representation for the ReLU RNN.
We show that the positive homogeneity property satisfied by a scalar ReLU does not expand the class of QCs for the repeated ReLU.  We present examples to demonstrate the stability / performance condition and study the effect of the lifting horizon.

\end{abstract}


\section{Introduction}

This paper considers the analysis of recurrent neural networks (RNNs) with ReLU activation functions.  These are modeled by the interconnection of a discrete-time, linear time-invariant (LTI) system in feedback with a repeated ReLU. The goal is to derive sufficient conditions to prove stability and performance (as measured by the induced $\ell_2$ gain) for this RNN.  This work is motivated by the increasing interest in RNNs for inner-loop feedback control. Such RNNs can potentially improve performance over more standard LTI controllers. However, the analysis of closed-loop stability and performance is challenging due to the nonlinear activation functions in the RNN. The sufficient conditions in this paper are one step to address this issue.

Our technical approach combines multiple ingredients in the existing literature.  First, we note that the scalar ReLU has slope restricted to $[0,1]$. Hence the repeated ReLU satisfies a known QC involving doubly hyperdominant matrices \cite{Willems:1968,Willems:71,kulkarni02}. Second, we present QCs that are specific to repeated ReLUs building on work in \cite{richardson23,ebihara21}. Next, we write the ReLU RNN using a ``lifted" representation over an $N$-step horizon \cite{khargonekar85}. This lifting allows us to construct more general QCs that hold for the ReLU across multiple time steps. Finally, we combine Lyapunov and dissipativity theory \cite{willems72a,willems72b,schaft99,khalil01} with these QCs to obtain sufficient conditions for stability and performance of the lifted ReLU RNN. Stability and performance of the original (unlifted) ReLU RNN follows from this analysis.

This paper adds to the broader literature on 
integral quadratic constraint (IQC) conditions \cite{megretski97,veenman16,scherer22,seiler15}.
Our stability/performance condition is similar to the discrete-time, IQC formulations in \cite{lessard2016analysis,Lee:20,taylor18,Carrasco:2016,fry17,hu17,fetzer17,carrasco19}.  The most closely related conditions are in \cite{Lee:20,vanscoy22}, both of which use lifting and static QCs (rather than dynamic IQCs). There is also growing literature on using these techniques to analyze NNs and RNNs \cite{soykens99,tan2024robust,yin2021stability,richardson23,ebihara21}. We specifically build on the results for ReLU RNNs in \cite{richardson23,ebihara21}. In particular, \cite{richardson23} derives QCs for ReLU and uses Lyapunov theory to prove stability for continuous-time ReLU RNNs.  We build on this work in discrete-time and also prove $\ell_2$ bounds. The work in \cite{ebihara21} also addresses discrete-time ReLU RNNs, similar to our paper, but uses small-gain arguments. In contrast, we use QCs and Lyapunov/dissipativity theory leading to related, but different conditions.

Our contributions to this existing literature are as follows.  First, we write a general class of QCs for repeated ReLU using known existing properties. We show that the positive homogeneity property satisfied by scalar ReLUs, i.e. $\phi(\beta v) = \beta v$ for all $\beta \ge 0$, does not expand the class of QCs for repeated ReLUs. Second, we use the discrete-time lifting, originally proposed in \cite{khargonekar85}, that maintains the state dimension. This is different from the liftings used previously in \cite{Lee:20,vanscoy22} for which the lifted state  grows with the lifting horizon.  Finally, we present examples to illustrate the effects of the ReLU QCs and lifting time horizon.
We show that the ReLU QCs can significantly reduce the conservatism as compared to QCs developed for the more general class of slope-restricted nonlinearities (although the performance is problem dependent).

\section{Notation}


This section briefly reviews basic notation regarding vectors, matrices, and signals.  Let $\R^n$ and $\R^{n\times m}$ denote the sets of real $n\times 1$
vectors and $n\times m$ matrices, respectively. Moreover, $\R_{\ge 0}^n$ and $\R_{\ge 0}^{n\times m}$ denote vectors and matrices of the given
dimensions with non-negative entries.   Let $D^{n}$ denote the set of $n\times n$ diagonal matrices and $D^n_{\ge 0}$ the subset
of diagonal matrices with non-negative entries.   Finally, $M\in \R^{n\times n}$ is a \emph{Metzler matrix} if the off-diagonal entries are non-negative, i.e. $M_{ij}\ge 0$ for $i\ne j$.  A matrix $M\in \R^{n\times n}$ is \emph{doubly hyperdominant} if the off-diagonal elements are non-positive, and both the row sums and column sums are non-negative. 



Next, let $\N$ denote the set of non-negative integers.  Let $v:\N \to \R^n$ and $w:\N \to \R^n$ be real, vector-valued
sequences.  Define the inner product $\langle v,w \rangle : = \sum_{k=0}^\infty v(k)^\top w(k)$.  The
set $\ell_2$ is an inner product space with sequences $v$ that satisfy $\langle v,v\rangle < \infty$. The corresponding norm is
$\|v\|_2 := \sqrt{ \langle v,v \rangle}$. 


\section{Problem statement}

Consider the interconnection shown on the left of
Figure~\ref{fig:LFTdiagram}  with a static nonlinearity $\Phi$
wrapped in feedback around the top channels
of a nominal system $G$.  This interconnection is
denoted as $F_U(G,\Phi)$.  The nominal part
$G$ is a discrete-time, linear time-invariant (LTI)
system described by the following state-space model:
\begin{align}
  \label{eq:LTInom}
  \begin{split}
    & x(k+1) = A\, x(k) + B_1\,w(k) +  B_2\, d(k) \\
    & v(k)=C_{1}\,x(k)+D_{11}\, w(k)+ D_{12} \,d(k)\\
    & e(k)=C_{2}\,x(k)+D_{21}\, w(k)+ D_{22}\,d(k),
  \end{split}
\end{align}
where $x \in \R^{n_x}$ is the state. The inputs are $w\in \R^{n_w}$
and $d\in \R^{n_d}$ while $v\in \R^{n_v}$ and $e\in \R^{n_e}$ are
outputs.  The static nonlinearity $\Phi:\R^{n_v} \to \R_{\ge 0}^{n_v}$ maps $v$ to $w$ elementwise by
$w_i=\phi(v_i)$ for $i=1,\ldots,n_v$ where $\phi:\R\to\R_{\ge 0}$ is the ReLU function. The ReLU, shown on the right of
Figure~\ref{fig:LFTdiagram}, is:
\begin{align}
\label{eq:ReLU}
\phi(v)= \left\{
  \begin{array}{ll}
    0 & \mbox{if } v < 0 \\
    v & \mbox{if } v \geq 0 
  \end{array} 
  \right. .
\end{align}
We refer to $\Phi$ as a repeated ReLU and $\phi$ as the scalar ReLU. The interconnection $F_U(G,\Phi)$ is known as a linear fractional transformation (LFT) in the robust control literature \cite{zhou96}. The interconnection has its roots in the Lurye decomposition used in the absolute stability problem \cite{khalil01}.

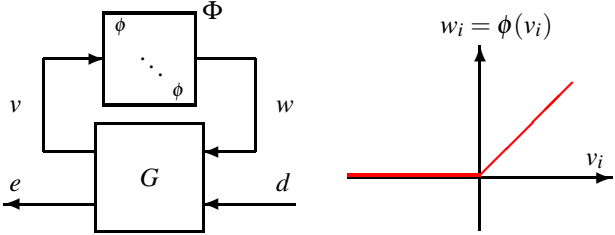
\begin{figure}[h!t]
\centering
\begin{picture}(230,90)(40,20)
 \thicklines
 \put(75,25){\framebox(40,40){$G$}}
 \put(143,40){$d$}
 \put(150,35){\vector(-1,0){35}}  
 \put(42,40){$e$}
 \put(75,35){\vector(-1,0){35}}  
    \put(78,73){\framebox(34,34){
    $\begin{smallmatrix}
    \phi & &  \\ & \,\, \ddots  & \\ 
    & & \phi 
    \end{smallmatrix}$
    }}
 \put(115,105){$\Phi$}
 \put(42,70){$v$}
 \put(55,55){\line(1,0){20}}  
 \put(55,55){\line(0,1){35}}  
 \put(55,90){\vector(1,0){23}}  
 \put(143,70){$w$}
 \put(135,90){\line(-1,0){23}}  
 \put(135,55){\line(0,1){35}}  
 \put(135,55){\vector(-1,0){20}}  
 \put(170,45){\vector(1,0){100}}  
 \put(260,50){$v_i$}
 \put(220,25){\vector(0,1){70}}  
 \put(205,100){$w_i=\phi(v_i)$}
 \put(170,46){{\color{red} \line(1,0){50}} }  
 \put(220,46){{\color{red} \line(1,1){35}} }  
\end{picture}
\caption{Left: Interconnection $F_U(G,\Phi)$ of a nominal discrete-time LTI system $G$ and repeated ReLU $\Phi$. Right: Graph of scalar ReLU $\phi$.}
\label{fig:LFTdiagram}
\end{figure}

This feedback interconnection involves an implicit equation if $D_{11}\ne 0$.  Specifically, the second equation in \eqref{eq:LTInom} combined with $w(k)=\Phi( v(k) )$ yields:
\begin{align}
   \label{eq:WellPosed}
   v(k)=C_{1}\,x(k)+D_{11}\, \Phi( v(k) )+ D_{12} \,d(k).
\end{align}
This equation is \emph{well-posed} if there exists a unique solution $v(k)$ for all  values of $x(k)$ and $d(k)$.  Well-posedness of this equation implies that the dynamic system $F_U(G,\Phi)$ is well-posed in the following sense:  for all
initial conditions $x(0)\in\R^{n_x}$ and  inputs $d\in \ell_2$ there exists unique solutions $x$, $v$, $w$ and $e\in \ell_2$ to the system $F_U(G,\Phi)$. There are simple sufficient conditions for well-posedness of \eqref{eq:WellPosed}, e.g. Lemma 1 in \cite{richardson23} (which relies on results in \cite{valmorbida18,zaccarian02}). Thus, we'll assume well-posed for simplicity in the remainder of the paper.

A well-posed interconnection $F_U(G,\Phi)$ is \emph{internally stable} if $x(k)\to 0$ from any initial condition $x(0)$ with $d(k)=0$ for $k\in \N$. In other words, $F_U(G,\Phi)$ is internally stable if $x=0$ is a globally asymptotically stable equilibrium point with no external input.  A well-posed interconnection $F_U(G,\Phi)$ has \emph{finite induced-$\ell_2$ gain} if there exists  $\gamma<\infty$ such that the output $e$ generated by any $d\in \ell_2$ with $x(0)=0$ satisfies $\|e\|_2 \le \gamma \, \|d\|_2$.  This bound
is denoted as $\|F_U(G,\Phi)\|_{2 \to 2} \le \gamma$.
The goal of this paper is to derive sufficient conditions, using the properties of the ReLU,
that verify $F_U(G,\Phi)$ is internally stable and has finite induced $\ell_2$ gain.
\section{Main Results}

The section presents the main results: sufficient conditions for stability and performance of the ReLU RNNs. These sufficient conditions are based on dissipativity theory and quadratic constraints (QCs) for the repeated ReLU.

\subsection{Quadratic Constraints for Slope-Restricted Functions}
\label{sec:QCslopeNL}

First, consider a scalar function $\phi:\R \to \R_{\ge 0}$  that
satisfies $\phi(0)=0$ and has slope restricted to $[0,1]$, i.e.:
\begin{align*}
  0 \le  \frac{ \phi(v_2)-\phi(v_1)}{v_2-v_1} \le 1, 
  \,\, \forall v_1, \, v_2 \in R, \, v_1\ne v_2.
\end{align*}
This subsection presents QCs that hold for any activation function $\phi$ that satisfies these conditions.  This includes, but is not limited to, the ReLU.

The graph of $\phi$ is restricted to the $[0,1]$ sector, i.e. its graph lies between lines that pass through the origin with slope 0 and 1. Thus the following QC holds $\forall v\in\R$ and $w=\phi(v)$:
\begin{align}
  \label{eq:sectorQC1}
     w (v-w) \ge 0.
\end{align}
Moreover, any two points on the graph of $\phi$ are connected by a line with slope in $[0,1]$. Hence the following QC holds for all $v_1,v_2\in\R$, $w_1=\phi(v_1)$, and $w_2=\phi(v_2)$:
\begin{align}
  \label{eq:slopeQC1}
    (w_2-w_1) \,
    \left( v_2-v_1-(w_2-w_1) \right) \ge 0.
\end{align}

Next, consider a repeated nonlinearity $\Phi:\R^m \to \R_{\ge 0}^m$ that maps $v$ to $w$ elementwise $w_k=\phi(v_k)$ for $k=1,\ldots,m$. We can derive a QC for $\Phi$ based on scaled combinations of the sector and slope constraints for the data $\{ (v_k, w_k) \}_{k=1}^m$. Specifically, for any non-negative
constants $\lambda_k$ and $\gamma_{kj}$:
\begin{align*}
&   \lambda_k \, w_k (v_k-w_k) \ge 0
    \,\mbox{ for } k=1,\ldots m \\
& \gamma_{kj}  \,  (w_k-w_j) \,
    \left( v_k-v_j-(w_k-w_j) \right) \ge 0.
    \,\mbox{ for } k,j=1,\ldots m.
\end{align*}
Summing over all these constraints gives an inequality
of the form $w^\top Q_0 (v-w) \ge 0$ where
\begin{align}
    (Q_0)_{kj} = \left\{
    \begin{array}{cc}
    \lambda_k + \sum_{k\ne j} (\gamma_{kj} + \gamma_{jk})
    & \mbox{if } k= j \\
    -(\gamma_{kj}+\gamma_{jk}) & \mbox{else}
    \end{array}
    \right. .
\end{align}
$Q_0$ is a symmetric, doubly hyperdominant matrix.  Lemma~\ref{lem:doublyhypQC}, stated next, provides a more general QC for this case in that it allows for $Q_0$ to be non-symmetric. 

\vspace{0.1in}
\begin{lemma}
\label{lem:doublyhypQC}
Let $\Phi:\R^{m} \to \R_{\ge 0}^m$ be a repeated nonlinearity with $\Phi(0)=0$ and slope restricted (elementwise) to $[0,1]$. If $Q_0 \in \R^{m \times m}$ is doubly hyperdominant then the following QC holds 
$\forall v\in\R^{m}$ and $w=\Phi(v)$:
\begin{align}
\label{eq:doublyhypQC}
\bmtx v \\ w \emtx^\top
\bmtx 0 & Q_0^\top \\ Q_0 & -(Q_0+Q_0^\top) \emtx
\bmtx v \\ w \emtx \ge 0.
\end{align}
\end{lemma}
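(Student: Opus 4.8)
\emph{Plan.} First I would simplify the statement. Expanding the block quadratic form in \eqref{eq:doublyhypQC} and using that $v^\top Q_0^\top w = w^\top Q_0 v$ (the transpose of a scalar) together with $w^\top (Q_0+Q_0^\top) w = 2\,w^\top Q_0 w$, the left-hand side collapses to $2\,w^\top Q_0 (v-w)$. Hence the lemma is equivalent to the scalar inequality $w^\top Q_0 (v-w) \ge 0$ for every doubly hyperdominant $Q_0$ and $w=\Phi(v)$. It is tempting to split $Q_0$ into its symmetric and antisymmetric parts and invoke the symmetric construction built from the $\lambda_k,\gamma_{kj}$ just before the lemma. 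This fails: the antisymmetric part $A$ contributes $w^\top A v$, which is generally nonzero and sign-indefinite (one checks this already for $Q_0=\bigl[\begin{smallmatrix}1&0\\-1&1\end{smallmatrix}\bigr]$). So the non-symmetric case needs a genuinely different argument, and this is the main obstacle.

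The idea is to exploit monotonicity rather than a convex-combination decomposition. Set $u:=v-w$. Comparing any point $(v_k,w_k)$ on the graph to the reference point $(0,0)$ and using $\Phi(0)=0$, slope in $[0,1]$, and the nonnegative codomain, I get: $\phi$ is nondecreasing and $w_k\ge 0$; moreover $v_k<0$ forces $w_k=0$ and $u_k=v_k<0$, while $v_k\ge 0$ gives $u_k\ge 0$. Next I relabel indices so that $v_1\le\cdots\le v_m$; this is a simultaneous row/column permutation, which preserves double hyperdominance (off-diagonal entries stay off-diagonal and row/column sums are merely permuted). In this order $w$ is nondecreasing with $w\ge 0$, and the slope bound $\le 1$ makes $u$ nondecreasing as well. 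Letting $p$ denote the number of indices with $v_k<0$, I then have $w_1=\cdots=w_p=0$, whereas $w_{p+1},\dots,w_m\ge 0$ and $u_{p+1},\dots,u_m\ge 0$ are both nondecreasing.

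I would then write $w^\top Q_0 u=\sum_{k,j}q_{kj}w_k u_j$ and split it. The rows $k\le p$ vanish because $w_k=0$. Among the surviving rows $k>p$, the columns $j\le p$ have $u_j<0$ and $q_{kj}\le 0$ (off-diagonal, since $k>p\ge j$), so each such term is nonnegative. The remainder is $S':=\sum_{k,j>p}q_{kj}w_k u_j$, where now every $w_k$ and $u_j$ is nonnegative and nondecreasing. Writing $w$ and $u$ on this index range as cumulative sums of nonnegative increments and applying summation-by-parts in both indices, I obtain $S'=\sum_{i,l>p}(\Delta w_i)(\Delta u_l)\,B_{il}$, where $B_{il}:=\sum_{k\ge i,\,j\ge l}q_{kj}$ is a ``bottom-right'' rectangular block sum of $Q_0$ and $\Delta w_i,\Delta u_l\ge 0$.

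The proof then reduces to the key claim that every such block sum satisfies $B_{il}\ge 0$. I would prove this directly from double hyperdominance by summing the block along the shorter direction so that no diagonal entry is ever subtracted: if $i\le l$, sum column-by-column, since $\sum_{k\ge i}q_{kj}=c_j-\sum_{k<i}q_{kj}\ge c_j\ge 0$ (the subtracted entries have $k<i\le l\le j$, hence are off-diagonal and $\le 0$, and $c_j$ is a column sum); if $i>l$, sum row-by-row in the analogous way using the nonnegative row sums. Thus $B_{il}\ge 0$, so $S'\ge 0$, and combined with the nonnegative cross terms this gives $w^\top Q_0(v-w)\ge 0$, which is \eqref{eq:doublyhypQC}. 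I expect the block-sum nonnegativity and the careful book-keeping around the threshold $p$ to be the only delicate points; everything else is routine summation-by-parts.
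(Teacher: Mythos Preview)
Your proof is correct. The reduction to $w^\top Q_0(v-w)\ge 0$ and the observation that $u:=v-w$ and $w$ are co-monotone with $w_ku_k\ge 0$ match the paper exactly; the paper calls this ``similarly ordered, unbiased'' data and then simply invokes Theorem~3.8 of Willems~\cite{Willems:71} to conclude $w^\top Q_0 u\ge 0$, noting afterwards that Willems' proof rests on a cyclic rearrangement inequality. What you do differently is supply a self-contained proof of that cited step: your Abel/summation-by-parts decomposition into $\sum_{i,l}(\Delta w_i)(\Delta u_l)B_{il}$ together with the block-sum bound $B_{il}\ge 0$ is precisely a direct elementary proof of the Willems rearrangement fact for doubly hyperdominant matrices. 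So the two arguments are the same in spirit; yours is longer but avoids the external reference, and your handling of the threshold index $p$ cleanly isolates where the nonnegative-codomain hypothesis $\Phi:\R^m\to\R_{\ge 0}^m$ is actually used.
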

\vspace{0.1in}
\begin{proof}
This follows from the results in Section 3.5 of \cite{Willems:71}. Specifically,  for any pair $v$ and $w=\Phi(v)$, define $\tilde{v} :=v-w \in \R^m$. Equation~\ref{eq:slopeQC1} implies that $\{ (\tilde{v}_k, w_k) \}_{k=1}^m$ is similarly ordered data, i.e. $\tilde{v}_k > \tilde{v}_j$ implies  $w_k \ge w_j$. Moreover, it follows from \eqref{eq:sectorQC1} that the data is unbiased, i.e.  $w_k \tilde{v}_k \ge 0$ for $k=1,\ldots,m$. Theorem 3.8 in~\cite{Willems:71} implies that if $Q_0$ is doubly hyperdominant then $w^\top Q_0 \tilde{v} \ge 0$.  Thus $w^\top Q_0(v-w) + (v-w)^\top Q_0^\top w \ge 0$.
\end{proof}
\vspace{0.1in}

The proof of Theorem 3.8 in~\cite{Willems:71} relies on a cyclic rearrangement inequality satisfied by similarly ordered, unbiased data. In fact, the doubly hyperdominance condition is the largest class of QCs that holds for slope-restricted functions whose graph passes through the origin.  A precise statement of this fact is given in Theorem 1 of \cite{kulkarni02}.\footnote{Theorem 1 in \cite{kulkarni02} is stated for repeated monotone nonlinearities. A similar fact holds for nonlinearities with slope restricted to $[0,1]$ by a transformation of the input-output data.}

\subsection{Quadratic Constraints for ReLU}

The previous subsection presented QCs for functions with
slope restricted to $[0,1]$. This section derives several additional QCs that are specific to the repeated ReLU. This builds on  prior work in \cite{richardson23,ebihara21}. The starting point for these QCs are the following properties for the scalar ReLU that have been used previously in the literature \cite{richardson23,drummond24,fazlyab2020safety,ebihara21}:

\begin{enumerate}
\item \emph{Positivity:} The scalar ReLU is non-negative for all inputs: $\phi(v)\ge 0$ $\forall v\in \R$.

\item \emph{Positive Complement:} The scalar ReLU satisfies $\phi(v)\ge v$ $\forall v\in \R$.

\item \emph{Complementarity:} In addition to the $[0,1]$ sector, the ReLU satisfies the stronger complementarity property: its graph is identically on the line of slope 0 (when $v\le 0$) or the line of slope 1 (when $v\ge 0$).  Thus, $\phi(v) \, (v-\phi(v))=0$ $\forall v\in\R$.

\item \emph{Positive Homogeneity:} The scalar ReLU is homogeneous for all non-negative constants: $\phi(\beta v)= \beta \phi(v)$ $\forall v\in \R$ and $\forall \beta \in \R_{\ge 0}$.
\end{enumerate}

Properties 1-3 can be combined to construct QCs for the repeated ReLU. Lemmas~\ref{lem:repReLUQC1} and \ref{lem:repReLUQC2} below are variations of
results in \cite{richardson23}.

\vspace{0.1in}
\begin{lemma}
\label{lem:repReLUQC1}
Let $\Phi:\R^{m} \to \R_{\ge 0}^m$ be a repeated ReLU. If $Q_1 \in D^m$ then the following QC holds  $\forall v\in\R^{m}$ and $w=\Phi(v)$:
\begin{align}
\label{eq:repReLUQC1}
\bmtx v \\ w\emtx^\top 
\bmtx 0 & Q_1 \\ Q_1 & -2Q_1 \emtx
\bmtx v \\ w\emtx = 0 .
\end{align}
\end{lemma}
\vspace{0.1in}
\begin{proof}
$Q_1$ is diagonal by assumption so that:
\begin{align*}
\bmtx v \\ w\emtx^\top 
\bmtx 0 & Q_1 \\ Q_1 & -2Q_1 \emtx
\bmtx v \\ w\emtx = 
    2\sum_{k=1}^m (Q_1)_{kk} w_k (v_k-w_k).
\end{align*}
This is equal to zero based on the complementarity property for each $\{ (v_k, w_k) \}_{k=1}^m$. Note that the diagonal entries of $Q_1$ are not necessarily non-negative.
\end{proof}
\vspace{0.1in}

\vspace{0.1in}
\begin{lemma}
\label{lem:repReLUQC2}
Let $\Phi:\R^{m} \to \R_{\ge 0}^m$ be a repeated ReLU. If
$Q_2=Q_2^\top$, $Q_3=Q_3^\top$, $Q_4\in \R_{\ge 0}^{m \times m}$ then the following QC holds 
$\forall v\in\R^{m}$ and $w=\Phi(v)$:
\begin{align}
\label{eq:repReLUQC2}
\bmtx v \\ w\emtx^\top 
\bmtx Q_2 & -(Q_2+Q_4^\top) \\ -(Q_2+Q_4) & 
Q_2+Q_3+Q_4+Q_4^\top \emtx
\bmtx v \\ w\emtx \ge 0 .
\end{align}
\end{lemma}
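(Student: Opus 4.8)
The plan is to expand the quadratic form on the left-hand side of \eqref{eq:repReLUQC2} and reduce it to a sum of pieces, each of which I can sign by one of the scalar ReLU properties quoted above. The two facts I would lean on at the outset are \emph{positivity}, $w \ge 0$, and \emph{positive complement}, $w - v \ge 0$ (both understood elementwise, the latter from $\phi(v)\ge v$), together with the \emph{complementarity} relation $w_k(v_k-w_k)=0$ for each $k$.

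First I would do the algebra. Writing out the block product and using $Q_2=Q_2^\top$, $Q_3=Q_3^\top$ together with the scalar identities $v^\top Q_2 w = w^\top Q_2 v$ and $v^\top Q_4^\top w = w^\top Q_4 v$, the left-hand side of \eqref{eq:repReLUQC2} collapses to $(v-w)^\top Q_2 (v-w) + w^\top Q_3\, w + 2\, w^\top Q_4 (w-v)$. Concretely, the four $Q_2$ contributions reassemble into the single form $(v-w)^\top Q_2 (v-w)$, the $Q_3$ term is just $w^\top Q_3\, w$, and the $Q_4$ terms combine (using $w^\top Q_4 w = w^\top Q_4^\top w$) into the cross term $2\, w^\top Q_4 (w-v)$. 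This rewriting is the organizing step; everything after it is sign-checking.

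Next I would argue that each piece is non-negative. The cross term is the easiest: since $w \ge 0$ and $w-v \ge 0$ elementwise and $Q_4$ has non-negative entries, every summand of $w^\top Q_4(w-v)$ is a product of non-negative scalars, so $2\, w^\top Q_4(w-v)\ge 0$ (here complementarity additionally kills the diagonal contributions, though it is not needed for the sign). The remaining two terms, $(v-w)^\top Q_2(v-w)$ and $w^\top Q_3\, w$, are where the hypotheses on $Q_2$ and $Q_3$ must do the work, and I expect this to be the main obstacle. Complementarity restricts $v-w$ and $w$ to complementary index sets ($v-w$ is supported where $v<0$, and $w$ where $v\ge 0$), so only the corresponding principal submatrices of $Q_2$ and $Q_3$ are ever exercised; the conditions placed on $Q_2$ and $Q_3$ must guarantee these two forms are non-negative, and this is precisely where I would expect to need more than bare symmetry (e.g. positive semidefiniteness, or copositivity on the relevant supports together with the established signs of $v-w$ and $w$). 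Granting that, the three non-negative pieces sum to \eqref{eq:repReLUQC2}.
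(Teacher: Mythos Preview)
Your decomposition and sign analysis are exactly the paper's argument, just packaged at the matrix level rather than entrywise: the paper writes the three scalar inequalities $(Q_2)_{kj}(w_k-v_k)(w_j-v_j)\ge 0$, $(Q_3)_{kj}\,w_kw_j\ge 0$, and $(Q_4)_{kj}\,w_k(w_j-v_j)\ge 0$ and sums them, which is precisely your three-term split $(v-w)^\top Q_2(v-w)+w^\top Q_3 w+2\,w^\top Q_4(w-v)$.

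The only hang-up is a misreading of the hypothesis. The clause ``$\in \R_{\ge 0}^{m\times m}$'' in the lemma statement is meant to apply to all three matrices, not just $Q_4$; this is confirmed by the paper's own proof (which requires $(Q_2)_{kj},(Q_3)_{kj}\ge 0$) and by the restatement in Lemma~\ref{lem:repReLUFin}. With $Q_2$ and $Q_3$ entrywise non-negative, the two terms you were worried about are immediate: since $w\ge 0$ and $w-v\ge 0$ elementwise, every summand of $(w-v)^\top Q_2(w-v)$ and of $w^\top Q_3 w$ is a product of non-negative scalars. Neither complementarity nor positive semidefiniteness is needed anywhere in this lemma.
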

\vspace{0.1in}
\begin{proof}
The positivity and positive complement properties are
linear constraints on $\{ (v_k, w_k) \}_{k=1}^m$. These can be combined to form QCs on pairs of points. The  QCs below hold for any
non-negative
$(Q_2)_{kj}$, $(Q_3)_{kj}$, and $(Q_4)_{kj}$:
\begin{align}
  \label{eq:posQC1}
   &  (Q_2)_{kj} \, (w_k-v_k) (w_j-v_j) \ge 0, \\
  \label{eq:posQC2}
   &  (Q_3)_{kj} \, w_k w_j \ge 0, \\
  \label{eq:posQC3}
   &  (Q_4)_{kj} \, w_k (w_j-v_j) \ge 0.
\end{align}
Equations~\ref{eq:posQC1} and \ref{eq:posQC2} are equivalent to
$(w-v)^\top Q_2 (w-v) \ge 0$
and $w^\top Q_3 w \ge 0$, respectively. These QCs
are quadratic forms and hence we can assume $Q_2=Q_2^\top$
and $Q_3=Q_3^\top$ without loss of generality.  
Equation~\ref{eq:posQC3} is equivalent to
$w^\top Q_4 (w-v) + (w-v)^\top Q_4^\top w \ge 0$.
\end{proof}
\vspace{0.1in}

The next result combines all the QCs in the
Lemmas~\ref{lem:doublyhypQC}-\ref{lem:repReLUQC2}.  This is the most general QC based on the $[0,1]$ sector / slope restrictions combined with the
known properties for scalar ReLU. 

\vspace{0.1in}
\begin{lemma}
\label{lem:repReLUFin}
Let $\Phi:\R^{m} \to \R_{\ge 0}^m$ be a repeated ReLU. 
Let $Q_2=Q_2^\top$, $Q_3=Q_3^\top \in\R_{\ge 0}^{m \times m}$ and $\Tilde{Q}\in \R^{m\times m}$ be given 
with $\tilde{Q}$ a Metzler matrix. Then the following QC holds  $\forall v\in\R^{m}$ and $w=\Phi(v)$:
\begin{align}
\label{eq:repReLUFin}
\bmtx v \\ w\emtx^\top 
\bmtx Q_2 &  -\Tilde{Q}^\top-Q_2 \\ -\Tilde{Q}-Q_2 & Q_2+Q_3+\Tilde{Q}+\Tilde{Q}^\top \emtx
\bmtx v \\ w\emtx \ge 0.
\end{align}
\end{lemma}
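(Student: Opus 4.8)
The plan is to prove \eqref{eq:repReLUFin} without revisiting the elementary ReLU properties directly, and instead to exhibit its coefficient matrix as a sum of matrices already certified by Lemmas~\ref{lem:repReLUQC1} and~\ref{lem:repReLUQC2}. The only real obstacle is the diagonal of $\tilde Q$: Lemma~\ref{lem:repReLUQC2} produces QCs whose coefficient matrix $Q_4$ must be \emph{entrywise non-negative}, whereas here $\tilde Q$ is only assumed Metzler, so its diagonal entries may be negative. The key observation is that a Metzler matrix splits cleanly into an entrywise non-negative part and a free-sign diagonal part, and the free-sign diagonal is exactly what the complementarity \emph{equality} of Lemma~\ref{lem:repReLUQC1} can absorb, since that lemma places no sign restriction on its diagonal $Q_1$.

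First I would write $\tilde Q = Q_4 + \Lambda$, where $\Lambda := \mathrm{diag}(\tilde Q)\in D^m$ collects the diagonal entries of $\tilde Q$ and $Q_4 := \tilde Q - \Lambda$ collects the off-diagonal entries. Because $\tilde Q$ is Metzler, $Q_4$ has zero diagonal and non-negative off-diagonal entries, hence $Q_4\in\R_{\ge 0}^{m\times m}$, while $\Lambda$ is diagonal with entries of arbitrary sign.

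Next I would substitute this splitting into the coefficient matrix of \eqref{eq:repReLUFin} and compare it, block by block, with the coefficient matrix of \eqref{eq:repReLUQC2} evaluated at the given symmetric $Q_2,Q_3\in\R_{\ge 0}^{m\times m}$ and at this $Q_4$. Using $\Lambda^\top=\Lambda$, a direct computation shows their difference is
\begin{align*}
\bmtx 0 & -\Lambda \\ -\Lambda & 2\Lambda \emtx,
\end{align*}
which is precisely the coefficient matrix of \eqref{eq:repReLUQC1} with $Q_1 = -\Lambda\in D^m$. Thus the matrix in \eqref{eq:repReLUFin} is the sum of the Lemma~\ref{lem:repReLUQC2} matrix, whose QC is $\ge 0$ since $Q_2,Q_3$ are symmetric and entrywise non-negative and $Q_4\in\R_{\ge0}^{m\times m}$, and the Lemma~\ref{lem:repReLUQC1} matrix, whose QC is $=0$ for any diagonal $Q_1$. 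Evaluating both QCs at a common $(v,w)$ with $w=\Phi(v)$ and adding them yields \eqref{eq:repReLUFin}.

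To close, I would remark that this also makes good on the claim that \eqref{eq:repReLUFin} subsumes the doubly hyperdominant QC: setting $Q_2=Q_3=0$ and $\tilde Q=-Q_0$ recovers \eqref{eq:doublyhypQC} exactly, and $-Q_0$ is Metzler precisely because a doubly hyperdominant $Q_0$ has non-positive off-diagonal entries. In fact the Metzler condition on $\tilde Q$ is less restrictive than requiring $-\tilde Q$ to be doubly hyperdominant, which is where the extra ReLU-specific properties buy additional QCs beyond the slope restriction. I expect no difficulty beyond the diagonal/off-diagonal bookkeeping; the single point to state carefully is why the arbitrary-sign diagonal $\Lambda$ is admissible, namely that Lemma~\ref{lem:repReLUQC1} holds with equality and so imposes no sign constraint.
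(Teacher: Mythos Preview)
Your proof is correct and follows the same strategy as the paper: exhibit the coefficient matrix in \eqref{eq:repReLUFin} as a sum of matrices already handled by the earlier lemmas. The paper's version writes $\tilde Q = Q_4 - Q_0 - Q_1$ with $Q_0$ doubly hyperdominant, $Q_1\in D^m$, and $Q_4\in\R_{\ge 0}^{m\times m}$, then sums the QCs from Lemmas~\ref{lem:doublyhypQC}--\ref{lem:repReLUQC2} and argues the range of such $\tilde Q$ is precisely the Metzler matrices.

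Your decomposition $\tilde Q = Q_4 + \Lambda$ (off-diagonal plus diagonal) is effectively the paper's with $Q_0=0$, so you invoke only Lemmas~\ref{lem:repReLUQC1} and~\ref{lem:repReLUQC2}. This is a mild sharpening: it makes explicit that Lemma~\ref{lem:doublyhypQC} is not needed to establish \eqref{eq:repReLUFin}, consistent with your closing remark that the doubly hyperdominant QC is already subsumed by the ReLU-specific ones. The paper's inclusion of $Q_0$ is harmless but redundant for the stated direction of the lemma; your route is slightly cleaner and the bookkeeping you flag (arbitrary-sign diagonal absorbed by the equality in Lemma~\ref{lem:repReLUQC1}) is exactly the point.
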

\vspace{0.1in}
\begin{proof}
Gather all the QCs given in Lemmas~\ref{lem:doublyhypQC}-\ref{lem:repReLUQC2}. Define $\Tilde{Q}=Q_4-Q_0-Q_1$
where $Q_0 \in \R^{m\times m}$ is doubly hyperdominant,
$Q_1\in D^m$ and $Q_4 \in \R_{\ge 0}^{m\times m}$. The diagonal entries of $\tilde{Q}$ can be anything while the off-diagonal entries must be non-negative. Hence $\tilde{Q}$ is a Metzler matrix.  
\end{proof}
\vspace{0.1in}

Finally, the next lemma exploits the positive homogeneity property of the ReLU. Related results are given in Section 4.3 of \cite{ebihara21}.

\vspace{0.1in}
\begin{lemma}
\label{lem:repReLUHom}
Let $\Phi:\R^{m} \to \R_{\ge 0}^m$ be a repeated ReLU. Assume $\Phi$ satisfies the QC
defined by $M=M^\top \in \R^{2m \times 2m}$, i.e
$\forall v\in \R^m$ and $w=\Phi(v)$:
\begin{align}
\label{eq:repReLUHom1}
\bmtx v \\ w \emtx^\top 
\bmtx M_{11} & M_{12} \\ M_{12}^\top & M_{22} \emtx
\bmtx v \\ w \emtx \ge 0.
\end{align}
If $\Lambda\in D_{\ge 0}^m$ then $\Phi$ also satisfies the QC
defined by $\bsmtx \Lambda & 0 \\ 0 & \Lambda \esmtx M
\bsmtx \Lambda & 0 \\ 0 & \Lambda \esmtx \in \R^{2m\times 2m}$, i.e.
$\forall \bar{v}\in \R^m$ and $\bar{w}=\Phi(\bar{v})$:
\begin{align}
\label{eq:repReLUHom2}
\bmtx \bar{v} \\ \bar{w} \emtx^\top 
\bmtx \Lambda^\top M_{11} \Lambda & \Lambda^\top M_{12} \Lambda \\ 
(\Lambda^\top M_{12} \Lambda)^\top & \Lambda^\top M_{22} \Lambda \emtx
\bmtx \bar{v} \\ \bar{w} \emtx \ge 0    
\end{align} 
\end{lemma}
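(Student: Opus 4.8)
The plan is to exploit the positive homogeneity property componentwise to transform the given data pair into a new data pair for which the original QC applies. The key observation is that if $\Lambda = \mathrm{diag}(\lambda_1,\ldots,\lambda_m)$ with each $\lambda_i \ge 0$, then the diagonal structure together with positive homogeneity lets me relate $\Phi(\Lambda \bar v)$ to $\Lambda \Phi(\bar v)$.

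First I would fix an arbitrary $\bar v \in \R^m$ and set $\bar w = \Phi(\bar v)$, so that $\bar w_i = \phi(\bar v_i)$ for each $i$. The main step is to define the new input $v := \Lambda \bar v$ and to show that $w := \Phi(v)$ satisfies $w = \Lambda \bar w$. This is where positive homogeneity enters: since $\Lambda$ is diagonal with non-negative entries, the $i$-th component is $v_i = \lambda_i \bar v_i$, and by the scalar identity $\phi(\lambda_i \bar v_i) = \lambda_i \phi(\bar v_i) = \lambda_i \bar w_i$ because $\lambda_i \ge 0$. Hence $w_i = \lambda_i \bar w_i$ for every $i$, i.e. $w = \Lambda \bar w$. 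The crucial role of the diagonal (rather than merely non-negative) assumption on $\Lambda$ is that it keeps the ReLU acting componentwise, so that homogeneity can be applied coordinate by coordinate; a general non-negative matrix would mix components and break this.

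Next I would substitute this pair $(v,w) = (\Lambda \bar v, \Lambda \bar w)$ into the hypothesized QC \eqref{eq:repReLUHom1}. Since the QC holds for all inputs and their ReLU images, it holds in particular for this pair. Writing out the quadratic form and factoring the $\Lambda$ blocks out of the vector $\bsmtx v \\ w \esmtx = \bsmtx \Lambda & 0 \\ 0 & \Lambda \esmtx \bsmtx \bar v \\ \bar w \esmtx$ gives
\begin{align*}
\bmtx \bar v \\ \bar w \emtx^\top
\bmtx \Lambda & 0 \\ 0 & \Lambda \emtx M
\bmtx \Lambda & 0 \\ 0 & \Lambda \emtx
\bmtx \bar v \\ \bar w \emtx \ge 0,
\end{align*}
using that $\Lambda = \Lambda^\top$ since it is diagonal. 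Expanding the congruence transformation on $M$ and matching blocks yields exactly the matrix in \eqref{eq:repReLUHom2}, which completes the argument.

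I do not expect any genuine obstacle here; the result is essentially a congruence-transformation bookkeeping exercise. The only point requiring care is the verification that $\Phi(\Lambda \bar v) = \Lambda \Phi(\bar v)$, which relies precisely on $\Lambda$ being diagonal with non-negative diagonal so that scalar positive homogeneity applies in each coordinate. It is worth remarking—consistent with the paper's stated contribution that positive homogeneity does \emph{not} enlarge the QC class for repeated ReLUs—that the transformed matrix $\bsmtx \Lambda & 0 \\ 0 & \Lambda \esmtx M \bsmtx \Lambda & 0 \\ 0 & \Lambda \esmtx$ is generated by a congruence that preserves the structural form (Metzler/doubly hyperdominant/diagonal blocks) already captured in Lemma~\ref{lem:repReLUFin}, so no new QCs arise beyond those already available.
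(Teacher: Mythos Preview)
Your proposal is correct and follows essentially the same approach as the paper's proof: use positive homogeneity componentwise to establish $\Phi(\Lambda\bar v)=\Lambda\Phi(\bar v)$, then substitute $(v,w)=(\Lambda\bar v,\Lambda\bar w)$ into the assumed QC and factor out the block-diagonal $\Lambda$ to obtain the congruence-transformed inequality. The paper's argument is identical in structure, just more tersely stated.
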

\vspace{0.1in}
\begin{proof}
Take any $\bar{v}\in\R^m$ and $\bar{w}=\Phi(\bar{v})$. 
Let $\Lambda$ is a diagonal matrix with non-negative entries.
Apply positive homogeneity of the scalar ReLU, element-wise,
to conclude that $\Lambda \bar{w} = \Phi( \Lambda \bar{v})$.
Next, define $v=\Lambda \bar{v}$ and $w=\Lambda \bar{w}$
so that $w=\Phi(v)$. The QC in \eqref{eq:repReLUHom1} holds, by assumption, for $(v,w)$. Substitute $v=\Lambda\bar{v}$ and $w=\Lambda\bar{w}$ to show the  QC in \eqref{eq:repReLUHom2} also holds for 
 $(\bar{v},\bar{w})$.
\end{proof}
\vspace{0.1in}

The positive homogeneity property seems to generalize the class of QCs for repeated ReLU by allowing for an additional scaling $\Lambda \in D_{\ge 0}^m$. However, this additional freedom does not increase the class of QCs if we use the results of Lemma~\ref{lem:repReLUFin}. Specifically, 
if $Q_2, Q_3 \in \R_{\ge 0}^{m \times m}$
then $\Lambda Q_2\Lambda, \Lambda Q_3\Lambda\in \R_{\ge 0}^{m \times m}$ for all $\Lambda \in D_{\ge 0}^m$. Similarly,  if $\tilde{Q}\in \R^{m\times m}$ is a Metzler matrix then $\Lambda \tilde{Q} \Lambda$ is also a Metzler matrix for all $\Lambda \in D_{\ge 0}^m$.  Thus positive homogeneity does not provide additional freedom when combined with the most general QC given in Lemma~\ref{lem:repReLUFin}, i.e. we get the same class of QCs.  However, the positive homogeneity scaling $\Lambda$ can provide an additional degrees of freedom when combined with more restricted subsets of QCs.  For example, suppose we use the restricted class of QCs based on slope restrictions (Lemma~\ref{lem:doublyhypQC}). In this case, the positive homogeneity property can provide a benefit because if $Q_0\in \R^{m\times m}$ is doubly hyperdominant and  $\Lambda \in D_{\ge 0}^m$ 
then $\Lambda Q_0 \Lambda$ is not necessarily doubly hyperdominant. In this case, the scaling introduced by 
positivity homogeneity (Lemma~\ref{lem:repReLUHom}) will generalize beyond the initial (restricted) class of slope-restricted QCs.

\subsection{Conditions for Stability and Performance}

This section presents a sufficient condition for the feedback connection $F_U(G,\Phi)$ to be stable and have bounded induced $\ell_2$-gain. The condition uses a lifted plant, defined below, combined with Lyapunov/dissipativity theory and the QCs derived in the previous subsection. 

The first step is to create a lifted representation
for $F_U(G,\Phi)$ where $G$ is the LTI system \eqref{eq:LTInom}
and $\Phi$ is the repeated ReLU. Define the stacked vectors for both the input/output signals of $G$ as:
\begin{align*}
\begin{split}
    V_N(k) = & \bmtx v^\top(k) & v^\top(k-1) & \cdots & v^\top(k-N+1) \emtx^\top, \\
    W_N(k) = & \bmtx w^\top(k) & w^\top(k-1) & \cdots & w^\top(k-N+1) \emtx^\top, \\
    D_N(k) = & \bmtx d^\top(k) & d^\top(k-1) & \cdots & d^\top(k-N+1) \emtx^\top, \\
    E_N(k) = & \bmtx e^\top(k) & d^\top(k-1) & \cdots & d^\top(k-N+1) \emtx^\top.
\end{split}
\end{align*}
These vectors stack the signals over an $N$ step horizon from $k-N+1$ to $k$. 

The nominal plant dynamics can be lifted \cite{khargonekar85} to evolve the state $N$ steps from $x(k-N+1)$ to $x(k+1)$. The lifted nominal plant, denoted $G_N$, has the form:
\begin{align}
  \label{eq:LTILifted}
  \begin{split}
    & x(k+1) = A^N \, x(k-N+1) + B_{1,N}\,W_N(k) 
       +  B_{2,N} \, D_N(k) \\
    & V_N(k)=C_{1,N}\,x(k-N+1)+D_{11,N}\, W_N(k)
         + D_{12,N} \,D_N(k)\\
    & E_N(k)=C_{2,N}\,x(k-N+1)+D_{21,N}\, W_N(k)
          + D_{22,N}\,D_N(k).
  \end{split}
\end{align}
The state matrices ($B_{1,N}$, $B_{2,N}$, etc.)  can be constructed for the specified horizon $N$ from the state-matrices of the original nominal plant. The lifted plant given here has the same state dimension as the original plant but the input/output dimensions  are stacked, e.g. $W_N \in \R^{n_w N}$ and $B_{1,N} \in \R^{n_x\times n_wN}$. A related, but different, lifting is used in \cite{Lee:20,taylor18} where the state-dimension grows with the horizon $N$.


The repeated ReLU can also be lifted. Define $\Phi_N:\R^{n_vN} \to \R^{n_v N}$ by applying the scalar ReLU elementwise to the input. In summary, the lifted system $F_U(G_N,\Phi_N)$ maps $D_N(k)$ to $E_N(k)$ based on the interconnection of
$G_N$ in \eqref{eq:LTILifted} and $W_N(k)=\Phi_N(V_N(k))$.

We next state the stability and performance condition for the lifted system.  
Define a linear matrix inequality (LMI) with
the lifted system of $G_N$:
\begin{align}
\label{eq:LiftedLMI}
\begin{split}
& LMI(P,M,\gamma^2) := \\
& \bmtx (A^N)^\top P (A^N)-P  & (A^N)^\top P B_{1,N} 
  &  (A^N)^\top P B_{2,N} \\ 
  B_{1,N}^\top P (A^N) & B_{1,N}^\top PB_{1,N} 
  & B_{1,N}^\top P B_{2,N} \\
  B_{2,N}^\top P (A^N) & B_{2,N}^\top P B_{1,N} 
& B_{2,N}^\top P B_{2,N}-\gamma^2 I\emtx \\
& + \bmtx C_{2,N}^\top \\ D_{21,N}^\top \\ D_{22,N}^\top\emtx
  \bmtx C_{2,N}^\top \\ D_{21,N}^\top \\ D_{22,N}^\top \emtx^\top
+  \bmtx C_{1,N}^\top & 0 \\ D_{11,N}^\top & I  \\ D_{12,N}^\top & 0 \emtx
M
  \bmtx C_{1,N}^\top & 0 \\ D_{11,N}^\top & I  \\ D_{12,N}^\top & 0 \emtx^\top
\end{split}
\end{align}
The next theorem provides an stability and performance condition for the ReLU RNN formulated with this LMI.  The proof uses a QC for the lifted ReLU and a standard Lyapunov / dissipation argument.

\vspace{0.1in}
\begin{theorem}    
    \label{thm:StabPerfConf}
Consider the ReLU RNN $F_U(G,\Phi)$ 
with the LTI system $G$ defined in \eqref{eq:LTInom}. Assume this interconnection is well-posed. Let $G_N$ 
and $\Phi_N : \R^{m} \to \R^{m}$ be the lifted system and lifted ReLU for some $N\in \N$ with dimension $m:=n_vN$.

Let $Q_2=Q_2^\top$, $Q_3=Q_3^\top \in\R_{\ge 0}^{m \times m}$ and $\Tilde{Q}\in \R^{m\times m}$ be given with $\Tilde{Q}$ a Metzler matrix.
Define $M=M^\top  \in \R^{2m \times 2m}$ as follows:
\begin{align}
\label{eq:MReLUQC}
M:=\bmtx Q_2 &  -\Tilde{Q}^\top-Q_2 \\ -\Tilde{Q}-Q_2 & Q_2+Q_3+\Tilde{Q}+\Tilde{Q}^\top \emtx,
\end{align}
Then the lifted ReLU $\Phi_N$ satisfies the QC  defined by $M$, i.e. 
$\bsmtx v \\ w\esmtx^\top M \bsmtx v \\ w\esmtx$
$\forall v\in\R^{m}$ and $w=\Phi_N(v)$.

Moreover, if there exists a positive semidefinite matrix $P=P^\top \in \R^{n_x\times n_x}$ and scalar $\gamma \ge 0$ such that $LMI(P,M,\gamma^2)<0$, then the ReLU RNN $F_U(G,\Phi)$ is internally stable and
has $\|F_U(G,\Phi)\|_{2\to 2} < \gamma$.
\end{theorem}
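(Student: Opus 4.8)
The first assertion is immediate. The lifted map $\Phi_N$ is itself a repeated ReLU of dimension $m = n_v N$ (the scalar ReLU applied elementwise), so Lemma~\ref{lem:repReLUFin} applies verbatim, and the matrix $M$ in \eqref{eq:MReLUQC} is exactly the block matrix appearing in \eqref{eq:repReLUFin}. Hence $\bsmtx v \\ w \esmtx^\top M \bsmtx v \\ w \esmtx \ge 0$ for all $v \in \R^m$ and $w = \Phi_N(v)$. The substance of the theorem is therefore the dissipativity argument, which I would run on the lifted interconnection $F_U(G_N,\Phi_N)$.

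The plan is to use the storage function $V(x) = x^\top P x$ and the stacked vector $z(k) := \bmtx x(k-N+1)^\top & W_N(k)^\top & D_N(k)^\top \emtx^\top$. The key step is the algebraic identity obtained by substituting the lifted equations \eqref{eq:LTILifted}:
\begin{align*}
z(k)^\top LMI(P,M,\gamma^2)\,z(k) & = V(x(k+1)) - V(x(k-N+1)) \\
& \quad {} + E_N(k)^\top E_N(k) - \gamma^2 D_N(k)^\top D_N(k) \\
& \quad {} + \bsmtx V_N(k) \\ W_N(k) \esmtx^\top M \bsmtx V_N(k) \\ W_N(k) \esmtx .
\end{align*}
The three summands match the three matrices added in \eqref{eq:LiftedLMI}: the $P$-block produces the storage difference together with $-\gamma^2 D_N^\top D_N$ (using $x(k+1) = A^N x(k-N+1) + B_{1,N} W_N(k) + B_{2,N} D_N(k)$); the outer-product term produces $E_N^\top E_N$; and the last block produces the QC form because $\bmtx C_{1,N} & D_{11,N} & D_{12,N} \\ 0 & I & 0 \emtx z(k) = \bmtx V_N(k)^\top & W_N(k)^\top \emtx^\top$. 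Since $W_N(k) = \Phi_N(V_N(k))$, the first assertion makes the last summand nonnegative. Because $LMI(P,M,\gamma^2) < 0$ there is $\epsilon > 0$ with $z^\top LMI\, z \le -\epsilon \|z\|^2$; dropping the nonnegative QC term yields the dissipation inequality $V(x(k+1)) - V(x(k-N+1)) \le \gamma^2 D_N(k)^\top D_N(k) - E_N(k)^\top E_N(k) - \epsilon \|z(k)\|^2$.

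Both conclusions then follow by summing this inequality over the non-overlapping windows $k_j := (j+1)N - 1$, $j = 0,1,\dots$, for which the state telescopes from $x(jN)$ to $x((j+1)N)$ and the stacked blocks tile the time axis, so that $\sum_j D_N(k_j)^\top D_N(k_j) = \|d\|_2^2$ and $\sum_j E_N(k_j)^\top E_N(k_j) = \|e\|_2^2$. For performance I set $x(0) = 0$; since $P \ge 0$ I may drop $V(x(TN)) \ge 0$, obtaining $\|e\|_{[0,TN-1]}^2 \le \gamma^2 \|d\|_{[0,TN-1]}^2$, and letting $T \to \infty$ gives $\|e\|_2 \le \gamma \|d\|_2$. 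For internal stability I set $d = 0$; then $V(x((j+1)N)) - V(x(jN)) \le -\epsilon \|x(jN)\|^2$, and summing gives $\epsilon \sum_j \|x(jN)\|^2 \le V(x(0)) < \infty$, so the sampled sequence satisfies $x(jN) \to 0$.

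The main obstacle I anticipate is not the bookkeeping but the two points where $P$ is only positive semidefinite. I cannot invoke a textbook Lyapunov theorem directly, so for stability I lean on the strict decrease $\le -\epsilon \|x(jN)\|^2$ — which comes from strict definiteness of the LMI rather than from $V$ alone — to get square-summability of the sampled state. That only delivers $x(jN) \to 0$, so I still have to propagate convergence to the intermediate times $x(jN + i)$, $1 \le i < N$; here I would use well-posedness and the global Lipschitz continuity of $\Phi$ (slope in $[0,1]$) to argue that the finite-horizon map $x(jN) \mapsto x(jN+i)$ is continuous and vanishes at the origin, giving $x(k) \to 0$ for the full sequence and hence global asymptotic stability. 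To upgrade $\|e\|_2 \le \gamma \|d\|_2$ to the strict bound $\|F_U(G,\Phi)\|_{2\to 2} < \gamma$, I would exploit the margin $\epsilon$: strictness of the LMI means $LMI(P,M,(\gamma')^2) < 0$ still holds for some $\gamma' < \gamma$, so the induced gain is at most $\gamma' < \gamma$. Finally, I would reconcile the stated definition of $E_N(k)$ — whose entries as written mix $e(k)$ with delayed copies of $d$ — with what the telescoping requires, namely delayed copies of $e$, which I treat as the intended definition so that $\sum_j E_N(k_j)^\top E_N(k_j) = \|e\|_2^2$.
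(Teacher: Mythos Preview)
Your proof is correct and follows essentially the same dissipativity argument as the paper: apply Lemma~\ref{lem:repReLUFin} for the QC, multiply the LMI by the stacked trajectory vector to obtain a dissipation inequality on the lifted system, drop the nonnegative QC term, telescope, and transfer stability and the gain bound back to the unlifted system via the norm-preserving lifting. The one technical difference is that the paper handles the merely semidefinite $P$ by perturbing it to $P+\epsilon I$ (so the storage is positive definite and a textbook Lyapunov theorem applies directly to the lifted dynamics), whereas you keep $P$ and extract the strict decrease from the LMI margin $-\epsilon\|z\|^2$; both devices are standard, and your reading of the $E_N$ definition is indeed the intended one.
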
    
\begin{proof}
This theorem is a standard dissipation result~\cite{schaft99,
willems72a, willems72b, khalil01,hu17}. A proof is given for completeness.


It follows directly from Lemma~\ref{lem:repReLUFin} that $\Phi_N$ satisfies the QC defined by $M$. We'll first use this QC and the LMI condition to show the lifted system $F_U(G_N,\Phi_N)$ is stable and with $\ell_2$ gain bounded by $\gamma$.
    
The LMI is strictly feasible by assumption and hence it
remains feasible under small perturbations: $LMI(P+\epsilon I,M,\gamma^2) + \epsilon I< 0$  for some sufficiently small $\epsilon >0$. Moreover, well-posedness of $F_U(G_N,\Phi_N)$ follows from the assumption that $F_U(G,\Phi)$ is well-posed. Hence 
the lifted system $F_U(G_N,\Phi_N)$
has a unique causal solution $(x, W_N, V_N, E_N)$ for any given initial condition $x_N(0)$ and input $D_N\in \ell_2^{n_dN}$.   

Define a storage function by $V\left(x\right) := x^\top(P+\epsilon I) x$.  Left and right
multiply the perturbed LMI by $[x(k)^\top, W_N(k)^\top, D_N(k)^\top]$ and its transpose.  The result, applying the lifted dynamics~\eqref{eq:LTILifted}, gives the following condition:
\begin{align*}
   & V\left(x(k+1)\right)  - V\left(x(k)\right) 
    -\gamma^2  D_N(k)^\top D_N(k)
   +  E_N(k)^\top E_N(k)  
    \\
& + \bmtx V_N(k) \\ W_N(k) \emtx^\top
    M \bmtx V_N(k) \\ W_N(k) \emtx 
    + \epsilon \bsmtx x(k) \\ W_N(k) \\D_N(k) \esmtx^\top \bsmtx x(k) \\ W_N(k) \\D_N(k) \esmtx
    \le 0
\end{align*}
The term involving $M$ is non-negative as the lifted ReLU satisfies the QC defined by $M$.  Thus the dissipation inequality simplifies to:
\begin{align}
\label{eq:DI}
\begin{split}
   & V\left(x(k+1)\right)-V\left(x(k)\right) +  E_N(k)^\top E_N(k)  \\
   & \le  (\gamma^2-\epsilon) D_N(k)^\top D_N(k)  - \epsilon x(k)^\top x(k) 
\end{split}  
\end{align}
Internal stability and the $\ell_2$ gain bound for the lifted system $F_U(G_N,\Phi_N)$ follow from this inequality.  Specifically, if $D_N(k)=0$ for all $k$ then \eqref{eq:DI} simplifies to the following Lyapunov inequality:
\begin{align*}
    V(x(k+1)) - V(x(k)) \le -\epsilon x(k)^\top x(k)
\end{align*}
Hence $V$ is a Lyapunov function and the lifted system is globally asymptotically stable (Theorem 27 in Section 5.9 of \cite{vidyasagar02}).

Next, assume $x(0)=0$ and $D_N \in \ell_2$. Summing \eqref{eq:DI} from $k=0$ to $k=T-1$ and using $V(x(0))=0$ yields:
\begin{align*}
   V\left(x(T)\right) +  \sum_{k=0}^{T-1} E_N(k)^\top E_N(k)  
 \le \sum_{k=0}^{T-1}  (\gamma^2-\epsilon) D_N(k)^\top D_N(k)  
\end{align*}
Note that $V(x(T))\ge 0$ because $P$ is positive semidefinite. Moreover, the right side is upper bounded by $(\gamma^2-\epsilon)\|D_N\|_2^2$ for all $T\in\N$.  This implies
that $E_N \in \ell_2$ and $\|E_N\|_2 < \gamma\|D_N\|_2$. 

In summary, the lifted system $F_U(G_N,\Phi_N)$ is internally stable and satisfies $\|F_U((G_N,\Phi_N)\|_{2\to 2} < \gamma$. The lifting is an  isomorphism and preserves signal norms: $\|d\|_2 = \|D_N\|_2$
and $\|e\|_2 = \|E_N\|_2$. It follows that the original system also satisfies the gain bound
$\|F_U((G,\Phi)\|_{2\to 2} < \gamma$. Moreover, 
the state of the lifted system corresponds to the evolving the state of the original system forward by $N$ steps.  Simple bounding arguments can be used to
show that if the lifted state converges to the origin then so does the original state. Hence internal stability of the lifted system also implies internal stability of the original system. 
\end{proof}
\vspace{0.05in}

\section{Numerical examples}

\subsection{Stability Analysis}

A variety of examples are given in  \cite{Carrasco:2016} to study the use of discrete-time Zames-Falb multipliers for analyzing stability of Lurye systems.  We will use Example 6 in Table 1 of \cite{Carrasco:2016} as a benchmark to illustrate our ReLU RNN stability and performance condition. 

Consider the Lurye system shown in Figure~\ref{fig:Lurye} where $\phi:\R\to\R$ is a nonlinear function, $\alpha$ is a non-negative scaling, and $G_{11}(z)=\frac{2z+0.92}{z^2-0.5z}$ is a discrete-time system.   The results in \cite{Carrasco:2016} focus on the class of nonlinearities that have slope restricted to $[0,1]$. The goal is to find the stability margin, i.e. the largest value of $\alpha\ge 0$ for which this Lurye system is stable for all nonlinearities in this class.  Their reported results are shown in
Table~\ref{tab:CarrascoStabResults}. The Circle and Tsypkin criteria both can be used to prove stability for $\alpha$ up to 0.6510. The Zames-Falb condition in \cite{Carrasco:2016} proves stability for $\alpha$ up to 1.087.\footnote{The results depend on the number of terms included in the Zames-Falb finite impulse response filter.  This is the best (largest) reported value.}  In fact, the Zames-Falb condition achieves the largest possible stability margin for the class of [0,1] slope restricted nonlinearities.  This follows because the Lurye system is unstable for $\phi(v)=v$ and $\alpha=1.087$.   The destabilizing value $\alpha=1.087$ is called the Nyquist gain.

\begin{figure}[h!t]
\centering
\begin{picture}(180,80)(0,-60)
 \thicklines
 \put(0,5){$d$}
 \put(0,0){\vector(1,0){32}}  
 \put(35,0){\circle{5}}
 \put(50,5){$v$}
 \put(38,0){\vector(1,0){42}}  
 \put(80,-15){\framebox(30,30){$\phi$}}
 \put(110,0){\vector(1,0){70}}  
  \put(145,5){$e=w$}
 \put(155,0){\line(0,-1){45}}  
 \put(155,-45){\vector(-1,0){20}}  
 \put(105,-60){\framebox(30,30){$\alpha$}}
 \put(105,-45){\vector(-1,0){20}}  
 \put(55,-60){\framebox(30,30){$G_{11}(z)$}}
 \put(55,-45){\line(-1,0){20}}  
 \put(35,-45){\vector(0,1){42}}  
 \put(23,-20){\line(1,0){8}}  
\end{picture}
\caption{Lurye System}
\label{fig:Lurye}
\end{figure}
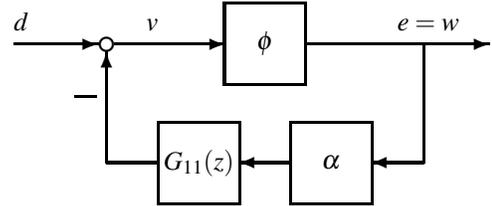

\begin{table}[h!]
\centering
\begin{tabular}{ |c|c c c c| } 
 \hline
Test type  & Circle & Tsypkin & Zames-Falb & Nyquist \\ \hline
  max $\alpha$  guarantee  & 0.6510 & 0.6510 & 1.0870 & 1.0870 \\ \hline
\end{tabular}
\caption{Stability margins:  Results from \cite{Carrasco:2016} for the Lurye system and nonlinearities with slope restricted to [0,1].}
    \label{tab:CarrascoStabResults}
\end{table}

Next, we used the condition in Theorem ~\ref{thm:StabPerfConf} to analyze the stability margin for the case when $\phi$ is the scalar ReLU. The Lurye system in Figure~\ref{fig:Lurye} is equivalent to the LFT representation $F_U(G,\phi)$ in Figure~\ref{fig:LFTdiagram} using:
\begin{align}
    G(z) = \bmtx -G_{11}(z) \, \alpha  & 1 \\ 1 & 0 \emtx
\end{align}
The Lurye system  with ReLU is stable for a given value of $\alpha$ if the $LMI(P,M,\gamma^2)<0$, defined in \eqref{eq:LiftedLMI}, is feasible for some $\gamma<\infty$. We solved for the largest feasible value of $\alpha$ using bisection using the ReLU QCs.\footnote{Note that $\gamma$ appears in the (3,3) block of the LMI in \eqref{eq:LiftedLMI}.  A Schur complement argument can be used to conclude that feasibility for some $\gamma<\infty$ is equivalent to feasibility of the upper left two blocks.  Hence our numerical implementation used bisection with only these upper left two blocks of the LMI.} 
The feasibility at each bisection step is a semidefinite program and was solved using CVX \cite{cvx14} as a front-end and SDPT3 \cite{toh99,tutuncu03} as the solver.  The bisection is initialized with $\underline{\alpha}=0$ and $\bar{\alpha}=200$. The bisection is terminated when $\bar{\alpha}-\underline{\alpha} \le 10^{-3} (1+\bar{\alpha})$. 

The top row  of Table~\ref{tab:LiftingStabResults} shows the results as a function of the lifting horizon.  For comparison, we also computed the stability margin using QCs defined by $M=\bsmtx 0 & Q_0^\top \\ Q_0 & -(Q_0+Q_0^\top) \esmtx$ where $Q_0$ is a doubly hyperdominant matrix. These results are shown in the second row of Table~\ref{tab:LiftingStabResults}.  Computational times are also given for each bisection to the stopping tolerance.

\begin{table}[h!]
\centering
\begin{tabular}{ |c|c c c c c| } 
 \hline
 lift size $N$ & 1 & 2 & 5 & 8 & 12\\ \hline
 ReLU: max $\alpha$  & 0.6516 & 0.6516 & 4.2999 & 33.472 & 181.543\\ 
 DH: max $\alpha$  & 0.6516 & 0.6516 & 0.8636 & 0.8820 & 0.8911\\ \hline \hline
 ReLU Comp. (s) & 4.73  & 4.87 & 5.15 & 5.10 & 5.32 \\
DH Comp. (s)  & 3.80 & 3.83 & 4.00 & 4.05 & 4.33 \\ \hline
 \end{tabular}
\caption{\emph{Stability margins (top):} Results using the doubly hyperdominant (DH) and ReLU QCs as a function of lifting horizon $N$. The ReLU QC exploits specific ReLU properties to significantly improve the stability margin. \emph{Computational times (bottom):}  given for each stability margin calculation.}
    \label{tab:LiftingStabResults}
\end{table}

The doubly hyperdominant QC is the strongest possible constraint for [0,1] slope restricted nonlinearities as discussed in Section~\ref{sec:QCslopeNL}.  This test corresponds to the circle criterion when $N=1$.  The stability margin improves with the lifting horizon, getting to a maximum value of 0.8911, but stays below the Nyquist value as expected.  The Zames-Falb condition, given in Table~\ref{tab:CarrascoStabResults}, provides  the larger stability margin of 1.0870 than our lifted Lyapunov condition with the doubly hyperdominance QC for $N=12$. 

More interestingly, the ReLU QC provides a stability margin that far exceeds the Nyquist value of 1.0870 as the lifting horizon $N$ increases, giving a maximum value of 181.543. This is possible because the ReLU QC exploits specific properties of the repeated ReLU that appears in the lifted system.  To empirically verify this result, we simulated the the Lurye system with $\phi$ as the scalar ReLU, $\alpha=100$, and 20 initial conditions for the state drawn from a $2\times 1$ Gaussian distribution with standard deviation of 10.  All simulations of the Lurye system with the ReLU converged back to the origin. This provides some independent validation of our stability margin results for the ReLU Lurye system.

\subsection{Gain Example}

In this section, we used the condition in Theorem ~\ref{thm:StabPerfConf} to analyze the induced $\ell_2$-norm for a ReLU RNN $F_U(G,\Phi)$.  The nominal part $G$ is a discrete-time, linear time-invariant (LTI)
system given in \eqref{eq:LTInom} with the following state matrices:
\begin{align*}
    A & = \bsmtx 0.84 & -0.17 & 0.10 & -0.04 \\ 0.17 & 0.80 & -0.11 & -0.04 \\ 0.05 & -0.11 & 0.90 & -0.08 \\ -0.04 & 0.18 & 0.01 & 0.74 \esmtx \\
    B_1 & = \bsmtx -0.08 & -0.18 \\ -0.11 & 0.11 \\ -0.01 & -0.05 \\ -0.04 & 0.03 \esmtx, \,\,
    B_2 = \bsmtx -0.04 & 0.11 & 0.14 \\ -0.06 & 0.03 & -0.02 \\ 0.02 & -0.03 & 0.01 \\ 0.04 & -0.08 & 0.20 \esmtx \\
    C_1 & = \bsmtx -0.35 & -0.40 & -1.04 & 1.36 \\ 0.90 & 0.55 & 1.13 & -0.46 \esmtx \\
    C_2 & = \bsmtx 0.79 & -0.62 & -1.87 & -0.09 \\ -0.67 & -1.05 & 1.80 & -0.06 \\ 2.25 & -1.08 & -0.36 & 1.08 \esmtx
\end{align*}
The nonlinearity $\Phi$ has $n_v=2$ inputs and $n_w=2$ outputs.  The dimensions of the disturbance and error channels are $n_d=3$ and $n_e=3$, respectively. The ``best" gain bound $\gamma$ for a specific lifting horizon $N$ is obtained by
minimizing $\gamma^2$ subject to the LMI constraint  $LMI(P,M,\gamma^2)<0$, as defined in \eqref{eq:LiftedLMI}. There are additional (linear) constraints on the QC matrices in $M$, e.g. $\tilde{Q}$ is a Metzler matrix.  This minimization is a semidefinite program and was solved using CVX \cite{cvx14} as a front-end and SDPT3 \cite{toh99,tutuncu03} as the solver. 
The top row  of Table~\ref{tab:LiftingGainResults} shows the results as a function of the lifting horizon.  For comparison, we also computed a bound on the $\ell_2$ gain using QCs defined by $M=\bsmtx 0 & Q_0^\top \\ Q_0 & -(Q_0+Q_0^\top) \esmtx$ where $Q_0$ is a doubly hyperdominant matrix. These results are shown in the second row of Table~\ref{tab:LiftingGainResults}.
The doubly hyperdominant QC is the strongest possible constraint for [0,1] slope restricted nonlinearities as discussed in Section~\ref{sec:QCslopeNL}. The results gain bound obtained using the ReLU QC is better (smaller) as it exploits specific properties of the ReLU.  Moreover, both results improve with the lifting horizon as the QCs can exploit couplings between the input/output data for the repeated nonlinearity.  Computational times are also given to run each minimization.  There is a mild growth in computation time with increasing horizon $N$ for this problem.

\begin{table}[h!]
\centering
\scalebox{0.93}{
\begin{tabular}{ |c|c c c c c c| } 
 \hline
 lift size $N$ & 1 & 2 & 3 & 4 & 5 & 6\\ \hline
 ReLU: gain bound & 7.556 & 5.530 & 3.932 & 3.466 & 3.247 & 3.128\\ 
 DH: gain bound & 34.379 & 13.450 & 7.992 & 5.844 & 4.811 & 4.263\\ \hline \hline
 ReLU Comp. (s) & 0.82  & 0.44 & 0.48 & 0.52 & 0.91 & 1.25 \\
DH Comp. (s)  & 0.28 & 0.28 & 0.29 & 0.36 & 0.47 & 0.61 \\ \hline
 \end{tabular}
 }
 \caption{\emph{Gain (top):} $\ell_2$-gain upper-bounds
 given using ReLU  and the doubly hyperdominant (DH) QCs as a function of lifting horizon $N$. The ReLU QC exploits specific ReLU properties to provide a significantly less conservative upper-bound on the gain. \emph{Computational times (bottom):}  provided for each upper-bound solution.  }
    \label{tab:LiftingGainResults}
\end{table}

\section{Conclusions}

This paper presents sufficient conditions for the stability and $\ell_2$-gain performance of RNNs with ReLU activation functions. These conditions are derived by combining Lyapunov/dissipativity theory with Quadratic Constraints (QCs) satisfied by repeated ReLUs. We use a ``lifted" representation for the ReLU RNN to derive our stability and performance condition. Future work will consider the computational cost of this condition and scalability to larger RNNs. We will also study the theoretical properties as the lifting horizon $N$ tends to infinity.


\section{Acknowledgments}

The authors acknowledge AFOSR Grant \#FA9550-23-1-0732 for funding of this work.

\bibliographystyle{IEEEtran}
\bibliography{references} 
 
\end{document}